\theoremstyle{definition}
\newtheorem{mydef}{Definition}
\newtheorem{lemma}{Lemma}
\newtheorem{theorem}{Theorem}
\newtheorem{prop}{Proposition}
\newcommand{\PP}{{\mathbb P}}
\newcommand{\EE}{{\mathbb E}}
\newcommand{\Keywords}[1]{\par\noindent 
{\small{\em Keywords\/}: #1}}
\author{Mareike Fischer, Michelle Galla, Lina Herbst and  Mike Steel}
\address{Allan Wilson Centre \\
University of Canterbury \\
Christchurch, New Zealand}
\address{Department for Mathematics and Computer Science, Ernst-Moritz-Arndt University \\ Greifswald, Germany}
\title{The most parsimonious tree for random data}
\date{\today}
\begin{document}

\maketitle

\doublespacing
\begin{abstract}
\noindent
Applying a method to reconstruct a phylogenetic tree from random data provides a way to detect whether that method has an inherent bias towards certain tree `shapes'.  For maximum parsimony, applied to a sequence of random 2-state data, each possible binary phylogenetic tree has exactly the same distribution for its parsimony score.  Despite this pleasing and slightly surprising symmetry, some binary phylogenetic trees are more likely than others to be a most parsimonious (MP) tree for a sequence of $k$ such characters, as we show.   For $k=2$, and unrooted binary trees on six taxa, any tree with a caterpillar shape has a higher chance of being an MP tree than any tree with a symmetric shape.  On the other hand, if we take any two binary trees, on any number of taxa,  we prove that this bias between the two trees vanishes as the number of characters grows. However, again there is a twist:  MP trees on six taxa are more likely to have certain shapes than a uniform distribution on binary phylogenetic trees predicts, and this difference does not appear to dissipate as $k$ grows. \\

\Keywords{Tree, maximum parsimony, random data, central limit theorem}
\end{abstract}

\section{Introduction}

The `shape' of reconstructed evolutionary trees is of interest to evolutionary biologists, as it should provide some insight into the processes of speciation and extinction 
\cite{al1, al2, hey, hol, lam, sat}.  In this paper, `shape' refers just to the  discrete shape of the tree (i.e. we ignore the branch lengths); the advantages of this are 
that it simplifies the analysis, and it also confers a certain robustness (i.e. the resulting probability distribution on discrete shapes is often independent of the fine details
of an underlying speciation/extinction model \cite{al0}, \cite{lam}).  For example, if all speciation (and extinction) events affect all taxa at any given epoch in the same way, then we should expect the shape of  a reconstructed tree to be that predicted by the discrete `Yule--Harding'
model \cite{al1, har, lam}.  In fact, a general trend (see e.g. \cite{al1}) is that the shape of phylogenetic trees  reconstructed from biological data tends to be a little less balanced than this model predicts, but is more balanced than what would be obtained under a uniform model in which each binary phylogenetic tree has the same probability (this model is sometimes also called the `Proportional-to-Distinguishable-Arrangements' (PDA) model).   

There are, however, other factors which can lead to biases in tree shape. One is non-random sampling of the taxa on which to construct a tree (influenced, for example, by the particular interests of the biologists or the application of a certain strategy to sample taxa).  Another cause of possible bias is  that a tree reconstruction method may itself  have an inherent preference towards certain tree shapes. A way to test this latter possibility is to apply the tree reconstruction method to data that contain no phylogenetic signal at all, in particular, purely random data, where each character is generated independently by a process that assigns states to the taxa uniformly (e.g.  by  the toss of a fair coin in the case of two states).  For some methods, such as `TreePuzzle', such data leads to very balanced trees (similar to the Yule--Harding model \cite{treepuzzle, zhu}).
However, other methods, such as maximum likelihood and maximum parsimony, lead to less balanced trees, that are closer in shape to the uniform model, as recently reported in \cite{hol}.  In the case of maximum parsimony, the two-state symmetric model has the even-handed  property that every binary tree has exactly the same distribution of its parsimony score on $k$ randomly generated characters.  Thus, it might be supposed that the maximum parsimony (MP)  tree for such a sequence of characters would also follow a uniform distribution. However, while this holds in special cases, it does not hold in general, as we show below.

\subsection{Trees and parsimony: definitions and basic properties}

In phylogenetics, graphs, especially trees, are used to describe the ancestral relationships among different species. A main goal of phylogenetics is to infer an evolutionary tree from data available from present-day species.
In graph theory, a tree $T=(V,E)$ consists of a connected graph with no cycles. Certain leaf-labelled trees (`phylogenetic trees') are widely used where the set of extant species label the leaves and the remaining vertices represent ancestral speciation events \cite{felsenstein}.
There are different methods of reconstructing a phylogenetic tree. One of the most famous tree reconstruction methods is maximum parsimony. For a given tree and discrete character data, the parsimony score can be found in polynomial time by using the Fitch--Hartigan algorithm \cite{fit, hart}. The parsimony score counts the number of changes (mutations) required on the tree to describe the data. 
This problem of finding the optimal parsimony score for a given tree is often called the `small parsimony' problem. The `big parsimony' problem aims at finding the most parsimonious tree (`MP tree') amongst all possible trees. This problem has been proven to be NP-hard \cite{np}. 

In this paper, we assume that each taxon from the leaf set $X$ of the tree is assigned a binary state (0 or 1) independently, and with equal probability.  This process is then repeated (also independently) to generate a sequence of characters (defined formally below).   For binary trees with random data, we are interested in the probability that a tree is an MP                tree, and also what happens when the length of the sequences or the number of leaves gets larger. 
In particular, we wish to determine whether each tree is equally likely to be selected as an MP  tree.  

\begin{mydef}{[Binary phylogenetic trees]} \label{xtree}
An \textit{(unrooted) binary phylogenetic X-tree} is a tree $T$ with leaf set $X$ and with every interior (i.e. non-leaf) vertex of degree  exactly three. We will let $UB(X)$ be the set of unrooted binary phylogenetic $X$-trees. When $X = [n] = \{1,\dots,n\}$, we will write $UB(n)$. 
\end{mydef}
\begin{mydef}{[Character, extension, parsimony score]}
\begin{itemize}
\item
A \textit{character on $X$} over a finite set $R$ of character states is any function $f$ from $X$ into $R$; $
f: X \rightarrow R.$
In this paper we will consider two-state characters;
$f: X \rightarrow \{ 0,1 \}.$

\item
A function $\bar{f}: V \rightarrow R$ such that $\bar{f}|_X =f$ is said to be an \textit{extension} of $f$ since it describes an assignment of states to all vertices of $T$ that agrees with the states that $f$ stipulates at the leaves.

\item
Let $ch(\bar{f},T) := \left|\{ e=\{u,v\} \in E: \bar{f}(u) \neq \bar{f}(v) \}\right|$ be the \textit{changing number} of $\bar{f}$. Given a character $f: X \rightarrow R$, the \textit{parsimony score of $f$} on $T$, denoted $ps(f,T)$, is the smallest changing number of any extension of $f$, i.e. :
\begin{align*}
ps(f,T) := \min_{\bar{f}:V \rightarrow R,\bar{f} |_X = f} \{ ch(\bar{f},T) \}.
\end{align*}
An extension $\bar{f}$ of $f$  for which $ch(\bar{f},T) = ps(f,T)$ is said to be a \textit{minimal extension}. \\
Let $\mathcal{C} = (f_1,\dots,f_k)$ be a sequence of characters on $X$. The \textit{parsimony score of $\mathcal{C}$} on $T$, denoted $ps(\mathcal{C},T)$, is defined by
$
ps(\mathcal{C},T) := \sum_{i=1}^{k} ps(f_i,T).
$

\end{itemize}

\end{mydef}

\section{Comparing given trees}
Let $X_k(T)$ be the parsimony score of $k$ random two-state characters on $T \in UB(n)$. We will see shortly (Proposition~\ref{unif}) that the distribution of $X_k(T)$ does not depend on the shape of $T$; it just depends on $n$.  Notice that $X_k(T) = X_1 + X_2 + \dots + X_k,$ where $X_i$ (for $i=1, \dots, k)$ form a sequence of independent and identically distributed random variables (with common distribution $X_1(T)$). 
If $\PP(X_k(T) = l)$ denotes the probability that $T$ has parsimony score $l$  then, from \cite{psp}, we have, for each $l \in [1, \lfloor n/2 \rfloor]$: 
\begin{align}
\PP(X_1(T) = l) = \frac{2 n - 3 l}{l} \cdot \binom{n-l-1}{l-1} \cdot 2^{l-n}, \label{eq:formel} \end{align}  
with $\PP(X_1(T) =0) = 2^{1-n}$ and $\PP(X_1(T)=l)=0$ for $l> \lfloor n/2 \rfloor$.
Furthermore, 
$\EE[X_1(T)] = \frac{3 n - 2 - (- \frac{1}{2})^{n-1}}{9} \sim \frac{n}{3}$
is the expected parsimony score of $T$, and $\EE[X_k(T)] = k \cdot \EE[X_1(T)].$  
An immediate consequence of (\ref{eq:formel}) is the following. 

\begin{prop}
\label{unif}
For every $k \geq 1$ and $n \geq 2$, the distribution of the parsimony score of $k$ independent random binary characters (i.e. $X_k(T)$) is the same for all $T \in UB(n)$. 
\end{prop}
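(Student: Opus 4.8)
The plan is to reduce the claim for general $k$ to the single-character case $k=1$ using the independence of the characters, and then to read off the $k=1$ case directly from the distribution already recorded in (\ref{eq:formel}).

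First I would observe that, because the $k$ characters are generated independently by the same mechanism, we have $X_k(T) = X_1 + \cdots + X_k$ with the $X_i$ independent and each distributed as $X_1(T)$, exactly as noted above. Hence the law of $X_k(T)$ is the $k$-fold convolution of the law of $X_1(T)$, and is therefore completely determined by the law of $X_1(T)$ alone. Consequently it suffices to prove that the distribution of the single-character score $X_1(T)$ is the same for every $T \in UB(n)$; the statement for all $k$ then follows immediately by convolution.

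For the $k=1$ step, the key point is that each of the $2^n$ characters $f\colon X\to\{0,1\}$ is equally likely, so that
\[
\PP(X_1(T)=l)=\frac{N_l(T)}{2^n}, \qquad N_l(T):=\left|\{f\colon X\to\{0,1\}: ps(f,T)=l\}\right|,
\]
where $N_l(T)$ counts the binary characters whose parsimony score on $T$ equals $l$. The entire content of the proposition is thus the assertion that $N_l(T)$ depends only on $n$ and $l$, and not on the shape of $T$. This is precisely what (\ref{eq:formel}) supplies, together with the stated values for $l=0$ and $l>\lfloor n/2\rfloor$: the right-hand side is a function of $n$ and $l$ alone, with no dependence on the topology of $T$. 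I would therefore invoke (\ref{eq:formel}) to conclude that $X_1(T)$ has the same law for all $T\in UB(n)$, and combine this with the convolution argument above to finish.

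The only genuine mathematical work sits in the $k=1$ shape-invariance, and that is already delivered by (\ref{eq:formel}); granted that formula, the proposition really is immediate. The hard part would only arise if one insisted on a self-contained proof avoiding \cite{psp}, in which case the obstacle is showing directly that $N_l(T)$ is a tree-shape invariant. I would attack that by a recursion over the edges of $T$ in the spirit of the Fitch--Hartigan algorithm, tracking for each rooted subtree a generating polynomial in a variable marking the number of state changes, and verifying that the polynomial for the whole tree is insensitive to how the internal branchings are arranged. Since (\ref{eq:formel}) is available to us, however, no such argument is needed here.
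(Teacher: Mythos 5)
Your argument is correct and matches the paper's: the paper likewise treats the proposition as an immediate consequence of formula (\ref{eq:formel}) from \cite{psp}, which shows the law of $X_1(T)$ depends only on $n$ and $l$, combined with the observation that $X_k(T)$ is a sum of $k$ i.i.d.\ copies of $X_1(T)$. Your extra remark about a self-contained Fitch--Hartigan-style recursion is a reasonable aside but, as you note, unnecessary here.
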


\bigskip

\subsection{Comparing two trees by their parsimony score}
We begin this section by describing a tree rearrangement operation on binary phylogenetic trees \cite[Chapter 2.6]{sem}, namely tree bisection and reconnection (TBR).
 Let $T$ be a binary phylogenetic $X$-tree and let  $e = \{ u,v \}$ be an edge of $T$.
A TBR operation is described as follows. Let $T^{\prime}$ be the binary tree obtained from $T$ by deleting $e$, adding an edge between a vertex that subdivides an edge of one component of $T \setminus e$ and a vertex that subdivides an edge of the other component of $T \setminus e$, and then suppressing any resulting degree-two vertices. In the case that a component of $T \setminus e$ consists of a single vertex, then the added edge is attached to this vertex. $T^{\prime}$ is said to be obtained from $T$ by a single TBR operation. 
\begin{prop}
\label{bias}
Let $T, T^{\prime} \in UB(n)$.
\begin{itemize}
\item[(i)] If $T$ and $T^{\prime}$ are one TBR apart, then $\PP(X_k(T) < X_k(T^{\prime})) = \PP(X_k(T^{\prime}) < X_k(T))$ holds for all $k \geq 1$. 
\item[(ii)] If $T$ and $T^{\prime}$ are more than one TBR apart, then the equality  $\PP(X_k(T) < X_k(T^{\prime})) = \PP(X_k(T^{\prime}) < X_k(T))$ can fail, even for $k=1$ and $n=6$.\end{itemize}
\end{prop}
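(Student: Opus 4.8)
The plan for (i) is to reduce to a single character and then exploit independence. Writing $X_k(T)=\sum_{i=1}^{k} ps(f_i,T)$, the pairs $(ps(f_i,T),ps(f_i,T'))$ are i.i.d.\ across $i$, so the difference $X_k(T)-X_k(T')=\sum_{i=1}^{k} D_i$ is a sum of i.i.d.\ copies of $D:=ps(f,T)-ps(f,T')$. If $D$ has a distribution symmetric about $0$, then so does $\sum_i D_i$, whence $\PP(X_k(T)<X_k(T'))=\PP(\sum_i D_i<0)=\PP(\sum_i D_i>0)=\PP(X_k(T')<X_k(T))$. Thus it suffices to prove that $D$ is symmetric about $0$ for a single uniform random character $f$.

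The key structural observation is that, because $T$ and $T'$ are one TBR apart, they are assembled from the \emph{same} two binary subtrees $t_A$ and $t_B$ (the components obtained by deleting the moved edge and suppressing the old attachment vertices), joined by a single edge; the trees differ only in which edge of $t_A$ and which edge of $t_B$ is subdivided to carry the connecting edge. Using additivity of the parsimony score across this connecting edge, together with the two-state Fitch root-sets $Z_A,Z_B\subseteq\{0,1\}$ at the two subdivision vertices, I would show
\[ ps(f,T)=ps(f|_A,t_A)+ps(f|_B,t_B)+\epsilon_T(f), \]
where $\epsilon_T(f)\in\{0,1\}$ equals $1$ precisely when $Z_A$ and $Z_B$ are disjoint singletons (i.e.\ the two subtrees force opposite states at the attachment point). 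The bound $\epsilon_T\le 1$ holds because one may always realise each subtree's own optimum and pay at most a single extra change on the connecting edge. Establishing this identity --- in particular pinning down $\epsilon_T$ through the Fitch sets --- is the main technical step, and the place I expect to need the most care.

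The first two terms, $P:=ps(f|_A,t_A)+ps(f|_B,t_B)$, depend only on the shared subtrees and not on the attachment edges, so they are identical for $T$ and $T'$. Hence $D=\epsilon_T(f)-\epsilon_{T'}(f)\in\{-1,0,1\}$, and $D$ is symmetric about $0$ iff $\PP(\epsilon_T=1,\epsilon_{T'}=0)=\PP(\epsilon_T=0,\epsilon_{T'}=1)$. By inclusion--exclusion this is equivalent to $\PP(\epsilon_T=1)=\PP(\epsilon_{T'}=1)$, i.e.\ $\EE[\epsilon_T]=\EE[\epsilon_{T'}]$. Taking expectations in the displayed identity gives $\EE[\epsilon_T]=\EE[X_1(T)]-\EE[P]$; here $\EE[X_1(T)]$ is independent of the tree by Proposition~\ref{unif}, while $\EE[P]$ depends only on $t_A,t_B$, which are shared. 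Therefore $\EE[\epsilon_T]=\EE[\epsilon_{T'}]$, which proves (i).

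For (ii) I would produce an explicit counterexample on six taxa. Choosing labelled trees $T,T'\in UB(6)$ that lie at TBR distance at least two --- a caterpillar and a balanced tree being the natural candidates --- one first checks the distance claim and then simply computes the joint law of $(X_1(T),X_1(T'))$ by enumerating all $2^6$ binary characters (which pair up under complementation) and evaluating $ps(f,T)$ and $ps(f,T')$ with the Fitch--Hartigan algorithm. Tabulating these scores and exhibiting that $\#\{f:ps(f,T)<ps(f,T')\}\neq\#\{f:ps(f,T')<ps(f,T)\}$ shows the symmetry fails already at $k=1$, $n=6$. The only non-routine point is selecting a pair for which the two counts actually differ; since part (i) forbids any such failure at TBR distance one, the search is confined to pairs at distance at least two.
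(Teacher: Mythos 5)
Your part (i) is correct, and its overall architecture matches the paper's: reduce to the single-character difference $D = ps(f,T)-ps(f,T^{\prime})$, note that $\Delta_k$ is a sum of i.i.d.\ copies of $D$, and deduce symmetry of $D$ about $0$ from $\EE[D]=0$ (Proposition~\ref{unif}) together with $D\in\{-1,0,1\}$; your detour through $\EE[\epsilon_T]=\EE[\epsilon_{T^{\prime}}]$ is this same argument in disguise, since $\EE[P]$ cancels and $0=\EE[D]=\PP(D=1)-\PP(D=-1)$ forces the symmetry once $|D|\le 1$. Where you genuinely differ is in how the bound $|D|\le 1$ is obtained: the paper simply quotes Bryant's Lemma~5.1 (one TBR step changes any character's parsimony score by at most $1$), whereas you derive it from scratch via the shared-subtree decomposition $ps(f,T)=ps(f|_A,t_A)+ps(f|_B,t_B)+\epsilon_T(f)$ with $\epsilon_T\in\{0,1\}$ controlled by the two-state Fitch root sets at the two attachment vertices. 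That decomposition is sound for two states (disjoint nonempty subsets of $\{0,1\}$ are singletons, and forcing any particular state at the attachment vertex costs at most one extra change in that subtree), so your route amounts to a self-contained proof of exactly the special case of Bryant's lemma that is needed; it buys independence from the citation at the price of the Fitch bookkeeping you yourself flag as the delicate step.

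For part (ii) your plan coincides with the paper's --- take a caterpillar and a symmetric tree in $UB(6)$ at TBR distance at least two and enumerate the $32$ characters with $f(1)=0$ --- but as written it is only a plan: you never verify that the two counts actually differ, and nothing in your argument guarantees a priori that this particular pair works. Since exhibiting a pair for which the counts differ is the entire content of (ii), the enumeration must actually be carried out. (It does succeed: the paper finds $\PP(\Delta_1=-2)=\tfrac{1}{32}$, $\PP(\Delta_1=-1)=\tfrac{3}{32}$, $\PP(\Delta_1=+1)=\tfrac{5}{32}$ and $\PP(\Delta_1=+2)=0$, so $\PP(\Delta_1<0)=\tfrac{4}{32}\neq\tfrac{5}{32}=\PP(\Delta_1>0)$.)
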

\begin{proof}
\mbox{ }
\begin{itemize}
\item[(i)]  
From \cite[Lemma 5.1]{bryant}, if $T$ and $T^{\prime}$ are one TBR apart then for any character $f$,
$\left| ps(f,T) - ps(f,T^{\prime}) \right| \leq 1.$
In particular,
\begin{equation}
\label{lip}
\left| X_1(T) - X_1(T^{\prime}) \right| \leq 1.
\end{equation}
For $k \geq 1$, let $\Delta_k = X_k(T)-X_k(T')$. Then 
if $T, T^{\prime} \in UB(n)$ are one TBR apart, then $\Delta_1 = X_1(T) - X_1(T^{\prime})$ is either $0, 1$ or $-1$, by (\ref{lip}).  
Moreover,  $\PP(\Delta_1= m) = \PP(\Delta_1 = -m)$ for all $m \in \{0,1-1\}$, since
$\EE[\Delta_1] = 0$, by Proposition~\ref{unif}.
Furthermore,  $\Delta_k = D_1 + \cdots + D_k$, where $D_1, \ldots, D_k$ are independent and identically distributed as $\Delta_1$, so we have: 
\begin{align*}
\PP(\Delta_k = m) &= \sum_{\substack{m_1,\dots, m_k  \in \{-1,0,1\}: \\ m_1+ \dots + m_k=m}} \PP(D_1=m_1 \wedge D_2=m_2 \wedge \cdots \wedge D_k=m_k) \\
&= \sum_{\substack{m_1,\dots, m_k \in \{-1,0,1\}: \\ m_1+ \dots + m_k=m}} \prod_{j=1}^{k} \PP(D_j=m_j) 
= \sum_{\substack{m_1,\dots, m_k  \in \{-1,0,1\}: \\ m_1+ \dots + m_k=m}} \prod_{j=1}^{k} \PP(D_j=-m_j) \\
&= \sum_{\substack{m_1^{\prime},\dots, m_k^{\prime}  \in \{-1,0,1\}: \\ m_1^{\prime}+ \dots + m_k^{\prime}=-m}} \PP(D_1=m_1^{\prime} \wedge D_2=m_2^{\prime} \wedge \cdots \wedge D_k=m_k^{\prime}) 
= \PP(\Delta_k = -m).
\end{align*}
This provides the equality $\PP(X_k(T) < X_k(T^{\prime})) = \PP(X_k(T^{\prime}) < X_k(T))$ for all $k \geq 1 $. 
\item[(ii)] We prove this by exhibiting one counterexample, namely the trees shown in Fig.~\ref{diff}.  Let $\Delta_k = X_k(T) - X_k(T^{\prime})$. The equality $\PP(X_k(T) < X_k(T^{\prime})) = \PP(X_k(T^{\prime}) < X_k(T))$ is equivalent to $\PP(\Delta_k < 0) = \PP(\Delta_k > 0)$.

\begin{figure}[ht]
\centering
\includegraphics[scale=1.0]{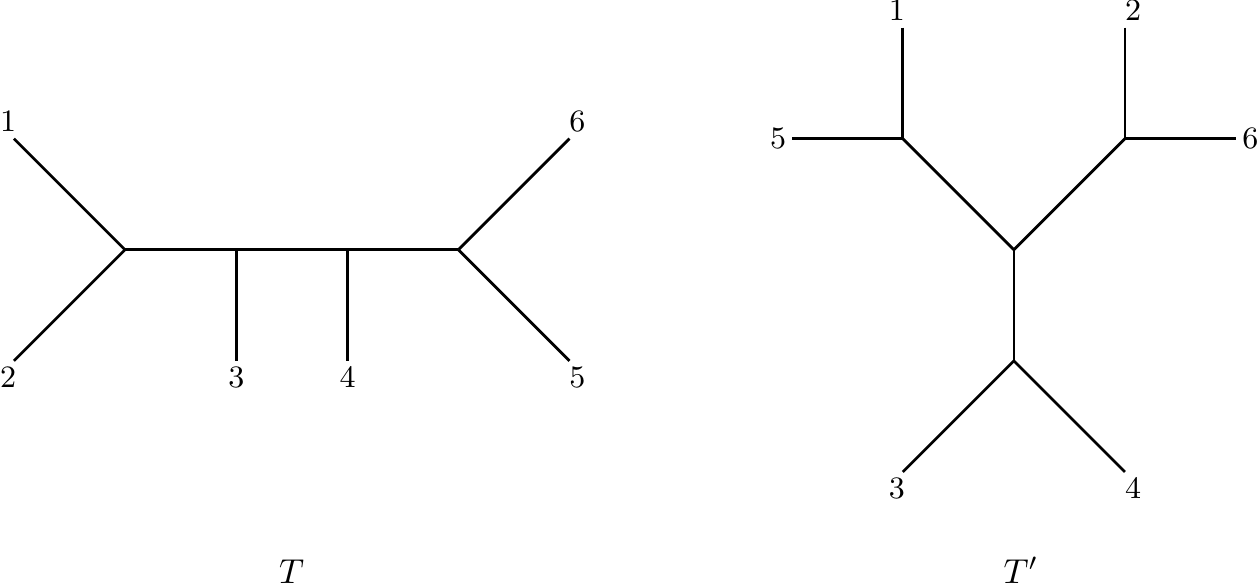}
\caption{Two trees $T, T^{\prime}  \in UB(6)$. Note that $T$ and $T^{\prime}$ are more than one TBR apart.}
\label{diff}
\end{figure}

\noindent
By calculating the parsimony score for the 32 different two-state characters (without loss of generality  we set $f(1) := 0$) we can assign the values that $\Delta_1$ can take and the probability of those values. 
$\Delta_1 = -2$ occurs precisely when  $X_1(T) = 1$ and $X_1(T^{\prime}) = 3$ with probability $p = \frac{1}{32}$. 
$\Delta_1 = -1$ occurs precisely when  $X_1(T) = 1$ and $X_1(T^{\prime}) = 2$ or $X_1(T) = 2$ and $X_1(T^{\prime}) = 3$ with probability $q = \frac{3}{32}$. 
$\Delta_1 = +1$ occurs precisely when $X_1(T) = 2$ and $X_1(T^{\prime}) = 1$ or $X_1(T) = 3$ and $X_1(T^{\prime}) = 2$ with probability $r = \frac{5}{32}$. 
Since $\Delta_1 = +2$ is not possible, $\Delta_1 = 0$ with probability $1 - (p+q+r) = \frac{23}{32}$. 
This leads to $\PP(\Delta_1 < 0) = \PP(\Delta_1 = -2) + \PP(\Delta_1 = -1) = \frac{4}{32} < \frac{5}{32} = \PP(\Delta_1 = +1) = \PP(\Delta_1 > 0)$. Therefore $\PP(X_k(T) < X_k(T^{\prime})) < \PP(X_k(T^{\prime}) < X_k(T))$ holds for $k=1$ and the choice of $T$ and $T'$ shown in Fig.~\ref{diff}. In other words,  the probability that the  symmetric tree $T$ is more parsimonious than the caterpillar tree $T'$ (on a single random binary character)  is higher than the probability that $T'$  is more parsimonious than $T$.
\end{itemize}
\end{proof}

\section{Maximum parsimony trees}

\begin{mydef}{[Maximum parsimony tree]}
Given a sequence ${\mathcal C} = (f_1, \ldots, f_k)$ of characters on $X$, a phylogenetic tree $T$ on $X$ that minimizes $ps(\mathcal{C},T)$ is said to be a \textit{maximum parsimony (MP) tree} for $\mathcal{C}$. The  corresponding ps-value is the \textit{parsimony or MP score of $\mathcal{C}$}, denoted $ps(\mathcal{C})$.
\end{mydef}

\noindent
\textbf{Notation:} Given $T \in UB(n)$, let $mp_{k}(T)$ denote the probability that $T$ is an MP tree for $k \geq 1$ random two-state characters on $[n]$. That is
$$mp_k(T):= \PP(X_k(T) \leq \min_{T^{\prime} \in UB(n)} \{ X_k(T^{\prime})\} ).$$
Notice that $mp_k(T)$ is not a probability distribution on $UB(n)$ since the positive probability of ties for the most parsimonious tree ensures that the $mp_k(T)$ values will
sum to a value greater than 1.

\begin{lemma}
\label{shape}
If  $T_1, T_2 \in UB(n)$  have the same shape then $mp_k(T_1) = mp_k(T_2)$.
\end{lemma}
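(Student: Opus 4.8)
The plan is to exploit the invariance of the random-data model under relabelling of the leaf set. I take ``same shape'' to mean that there is a permutation $\sigma$ of $[n]$ carrying $T_1$ to $T_2$: relabelling each leaf $i$ of $T_1$ by $\sigma(i)$, while leaving the underlying graph otherwise unchanged, produces $T_2$. Write $\sigma T$ for the tree so obtained, and for a character $f \colon [n] \to \{0,1\}$ write $\sigma f$ for the character with $(\sigma f)(i) := f(\sigma^{-1}(i))$. The crucial structural fact is that $T \mapsto \sigma T$ is a \emph{bijection} of $UB(n)$ onto itself, which extends coordinatewise to a bijection on sequences of characters. Throughout I regard the $X_k(\cdot)$ as coupled through a single random sequence $\mathcal{C} = (f_1,\dots,f_k)$, so that $X_k(T) = ps(\mathcal{C},T)$.

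The first key step is an equivariance property of the parsimony score: for every character $f$ and every tree $T$,
\[
ps(f, T) = ps(\sigma f, \sigma T).
\]
This holds because relabelling the leaves of $T$ by $\sigma$ induces a bijection between extensions $\bar f$ of $f$ on $T$ and extensions of $\sigma f$ on $\sigma T$ that preserves the changing number $ch(\cdot,\cdot)$ edge by edge; minimising over extensions on each side gives the identity. Applying this with $\sigma^{-1}$ in place of $\sigma$ and summing over $\mathcal{C}$ yields $ps(\mathcal{C}, \sigma T) = ps(\sigma^{-1}\mathcal{C}, T)$ for every $T$. The second ingredient is exchangeability: since each character assigns the two states to the $n$ leaves independently and uniformly, permuting coordinates by $\sigma^{-1}$ leaves the law of $\mathcal{C}$ unchanged, so $\sigma^{-1}\mathcal{C}$ has the same distribution as $\mathcal{C}$.

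Combining these, I would set $T_2 = \sigma T_1$ and rewrite $mp_k(T_2) = \PP\bigl(ps(\mathcal{C}, \sigma T_1) \leq \min_{T' \in UB(n)} ps(\mathcal{C}, T')\bigr)$ using the equivariance identity as
\[
mp_k(T_2) = \PP\Bigl(ps(\sigma^{-1}\mathcal{C}, T_1) \leq \min_{T' \in UB(n)} ps(\sigma^{-1}\mathcal{C}, \sigma^{-1}T')\Bigr).
\]
Because $T' \mapsto \sigma^{-1} T'$ ranges over all of $UB(n)$ as $T'$ does, the minimum on the right equals $\min_{T'' \in UB(n)} ps(\sigma^{-1}\mathcal{C}, T'')$; and since $\sigma^{-1}\mathcal{C}$ is distributed exactly as $\mathcal{C}$, the entire expression equals $mp_k(T_1)$, which finishes the argument.

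I do not expect a genuine obstacle here: the result is a clean symmetry/change-of-variables argument. The one point requiring care is that the minimum over $UB(n)$ transforms correctly — this needs $\sigma$ to act as a bijection on the \emph{whole} tree space, not merely to fix the two trees in question, so that the competing field of trees defining the MP criterion is preserved. The only places where an inverse or a convention could slip are in verifying the equivariance at the level of extensions and in tracking how $\sigma$ acts on characters versus on trees.
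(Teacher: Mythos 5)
Your argument is correct and is essentially the paper's own proof: both rest on the equivariance $ps(\mathcal{C},T)=ps(\mathcal{C}^{\sigma},T^{\sigma})$ together with the fact that $\mathcal{C}\mapsto\sigma^{-1}\mathcal{C}$ is a measure-preserving bijection on character sequences (the paper phrases this as a counting bijection over the uniformly weighted characters). Your version is somewhat more explicit about why the minimum over $UB(n)$ transforms correctly, a point the paper leaves implicit, but the underlying idea is identical.
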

\begin{proof}
Let $k \geq 1$ and  let $f_1, \dots, f_k$ be two-state characters. Then $ps(f_1, \dots, f_k, T) = ps(f_1^{\sigma}, \dots, f_k^{\sigma}, T^{\sigma})$,  where $\sigma$ is an element of the group $S_n$ of permutations on the leaf set $[n]$ of $T$. Notice that the map 
$f = (f_1, \dots, f_k) \mapsto f^{\sigma} = (f_1^{\sigma}, \dots, f_k^{\sigma})$ is a bijection, so the number of characters $f$ for which $T$ is an MP tree for $f$ equals the number of  characters $f$ for which $T^{\sigma}$ is an MP tree for $f$.
\end{proof}

It follows from Proposition~\ref{unif} and Lemma~\ref{shape} that if $n \geq 3$ and $k=1$, or if $k \geq 1$ and $n\leq 5$, then $mp_{k}(T)$ is constant for all $T \in UB(n)$.  However, this does not hold more generally, as we now state.

\begin{theorem}
\label{surprise}
$mp_{k}(T)$ is not constant for all $T \in UB(n)$ when $n=6$ and $k=2$. In particular, any given caterpillar tree (like $T$ in Fig.~\ref{diff})  has a higher probability of being an MP tree 
than a  symmetric tree (like $T'$ in Fig.~\ref{diff}).  
 \end{theorem}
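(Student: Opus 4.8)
The plan is to reduce the statement to a finite computation that two symmetry reductions make tractable, and then to carry out the residual count. First, since every tree in $UB(6)$ is either a caterpillar or the symmetric (three-cherry) tree, Lemma~\ref{shape} shows that $mp_2$ takes at most two values, one per shape; it therefore suffices to compare one fixed caterpillar $T$ with one fixed symmetric tree $T'$. Writing each character as one of the $2^6=64$ equally likely maps $[6]\to\{0,1\}$, a pair $\mathcal C=(f_1,f_2)$ is one of $4096$ equally likely outcomes, and for a shape representative $S$ we have $mp_2(S)=\frac{1}{4096}\,\#\{\mathcal C: ps(\mathcal C,S)=M(\mathcal C)\}$, where $M(\mathcal C):=\min_{T''\in UB(6)}ps(\mathcal C,T'')$ is the global MP score. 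Thus I must show $\#\{\mathcal C:\text{$T$ is MP}\}>\#\{\mathcal C:\text{$T'$ is MP}\}$.

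The key simplification is to isolate the pairs that can possibly distinguish the two shapes. A character whose minority state has size $m\le 1$ has the same parsimony score on every tree in $UB(6)$ (namely $0$ if $m=0$ and $1$ if $m=1$), so it contributes a fixed offset to every tree and cannot change which trees are MP. Hence if either $f_1$ or $f_2$ has minority size $\le 1$, the set of MP trees for $\mathcal C$ equals the set of MP trees for the remaining single character; and since single-character MP probability is shape-independent (Proposition~\ref{unif} with Lemma~\ref{shape}, as recorded after Lemma~\ref{shape}), such pairs contribute \emph{equally} to both counts. Consequently $\#\{\text{$T$ MP}\}-\#\{\text{$T'$ MP}\}$ is governed entirely by the $50\times 50$ pairs in which both characters have minority size $2$ or $3$, i.e.\ both induced splits are $2|4$ or $3|3$.

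On this residual family I would compute $M(\mathcal C)$ and the attainment pattern from how splits sit in the two trees, using the facts that a given $3|3$ split is realized (as an internal edge) by exactly $9$ trees, all caterpillars and no symmetric tree, while a given cherry ($2|4$) split is realized by $15$ trees, of which $3$ are symmetric and $12$ are caterpillars; here $T$ realizes one $3|3$ split together with the cherries $\{1,2\},\{5,6\}$, whereas $T'$ realizes no $3|3$ split and the three cherries $\{1,2\},\{3,4\},\{5,6\}$. When the two splits are compatible, a common tree displays both, so $M(\mathcal C)=\min_{T''}ps(f_1,T'')+\min_{T''}ps(f_2,T'')=2$, attained exactly by the trees displaying both splits; when they are incompatible, $M(\mathcal C)$ must be found by a short separate optimization (two crossing $3|3$ splits, for instance, force $M=3$) together with its set of minimizers. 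Running through the split-type pairs with $ps(\cdot,T),ps(\cdot,T')\in\{1,2,3\}$ then yields the two counts.

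The main obstacle is that the bias is a genuine aggregate effect, not the consequence of a single favourable configuration. Indeed the ``diagonal'' contributions already cancel: the $4$ ordered pairs with $f_1,f_2$ inducing $T$'s $3|3$ split make $T$ (but not $T'$) an MP tree, while the $4$ ordered pairs inducing the cherry $\{3,4\}$ make $T'$ (but not $T$) an MP tree, and all other same-split pairs leave both or neither MP. The surplus favouring the caterpillar therefore surfaces only once the off-diagonal pairs ($\sigma_1\neq\sigma_2$) are tallied, and in particular the incompatible pairs --- where $M(\mathcal C)$ exceeds the naive bound $\min_{T''}ps(f_1,T'')+\min_{T''}ps(f_2,T'')$ and must be evaluated against all $105$ trees --- are the delicate part of the bookkeeping. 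I expect this residual count, carried out by hand with the symmetry reductions above (or machine-verified), to produce a strictly positive difference, thereby establishing both the failure of constancy of $mp_2$ on $UB(6)$ and the asserted caterpillar-over-symmetric bias of Theorem~\ref{surprise}.
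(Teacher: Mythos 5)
Your reduction is sound and closely parallels the paper's own argument: restrict to one representative of each shape via Lemma~\ref{shape}; discard pairs involving a non-informative character; note that pairs with both characters compatible with the candidate tree $S$ contribute equally to both shapes; and recognise that the only shape-sensitive configurations are those with $ps(f_1,S)=1$ and $ps(f_2,S)=2$ (or vice versa), where $S$ is an MP tree precisely when the two induced splits are incompatible, so that the score $3$ matches the global minimum given by Lemma~\ref{pars3}. Your structural bookkeeping along the way (the $9$ caterpillars displaying a given $3|3$ split, the $3$ symmetric plus $12$ caterpillar trees per cherry, the exact cancellation of the diagonal contributions) all checks out.

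The gap is that you stop exactly where the theorem begins. The entire content of the statement is the \emph{sign} of the residual count, and you explicitly defer it: ``I expect this residual count \ldots to produce a strictly positive difference.'' Nothing in your argument determines which shape wins; as you yourself observe, the bias is an aggregate effect with no single decisive configuration, so the direction cannot be inferred from symmetry or from the diagonal terms (which cancel). The paper completes precisely this step: for each of the three choices of the edge $\alpha$ carrying the single change of the score-$1$ character, it counts the placements of the two $\beta$ edges of the score-$2$ character on opposite sides of $\alpha$ and corrects for characters counted twice, obtaining $2\cdot 6+2\cdot 6+4\cdot 4-6=34$ incompatible pairs for the caterpillar against $3\cdot(2\cdot 6)-3\cdot 2=30$ for the symmetric tree; only the resulting $34>30$ yields the claimed inequality. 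Without that enumeration (or an equivalent machine check, which you mention but do not supply) your proposal establishes the framework but not the theorem.
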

 
The proof of Theorem~\ref{surprise} requires a detailed case analysis to identify the MP tree(s) for all pairs of characters $(f_1, f_2)$; 
details are provided in the Appendix.  The result is also confirmed by simulations, which are provided in the following section. 

\section{Asymptotic analysis}
We first show that the bias exhibited in Proposition~\ref{bias}(ii) disappears asymptotically but the bias apparent in Theorem~\ref{surprise} does not.
\begin{prop}\label{central}
For all $T, T^{\prime} \in UB(n)$ and all $n$:
\begin{align*}
\lim_{k \rightarrow \infty} \PP(X_k(T) < X_k(T^{\prime})) = \frac{1}{2}.
\end{align*}
\end{prop}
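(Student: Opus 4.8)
The plan is to exploit the fact that $\Delta_k := X_k(T) - X_k(T')$ is a sum of independent, identically distributed, bounded increments, and to invoke the central limit theorem. Writing $D_i := X_i(T) - X_i(T')$ for the contribution of the $i$-th random character (applied to both trees simultaneously), the $D_i$ are i.i.d.\ copies of $\Delta_1$ with $\Delta_k = D_1 + \cdots + D_k$, exactly as in the proof of Proposition~\ref{bias}(i). Each $D_i$ is bounded, since $0 \le ps(f,\cdot) \le \lfloor n/2 \rfloor$, so it has finite variance; and by Proposition~\ref{unif} the two trees share the same marginal distribution for their parsimony score, whence $\EE[D_1] = \EE[X_1(T)] - \EE[X_1(T')] = 0$. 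Throughout I assume $T \neq T'$, the case $T = T'$ being degenerate.

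The key step is to verify the nondegeneracy condition $\sigma^2 := \mathrm{Var}(D_1) > 0$, without which the CLT normalization is unavailable. I would argue this from the fact that distinct binary phylogenetic $X$-trees have distinct split systems: since $T \neq T'$, some nontrivial split $A \mid B$ induced by an internal edge of $T$ is not a split of $T'$. Taking $f$ to be the two-state character with $f \equiv 0$ on $A$ and $f \equiv 1$ on $B$ gives $ps(f,T) = 1$ but $ps(f,T') \ge 2$, so $D_1 \neq 0$ on an event of probability at least $2^{-n} > 0$. As $\EE[D_1] = 0$, this yields $\sigma^2 = \EE[D_1^2] > 0$.

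With $\sigma^2 > 0$ in hand, the Lindeberg--L\'evy CLT gives $\Delta_k/(\sigma\sqrt{k}) \Rightarrow N(0,1)$. The only remaining care concerns the fact that $\Delta_k$ is integer-valued, so that $\PP(\Delta_k < 0) = \PP(\Delta_k \le -1)$ corresponds to the moving threshold $-1/(\sigma\sqrt{k})$ rather than a fixed point. I would resolve this using P\'olya's theorem: since the limit law has the continuous distribution function $\Phi$, the convergence of distribution functions is uniform, so
$$\PP(\Delta_k < 0) = \PP\!\left(\frac{\Delta_k}{\sigma\sqrt{k}} \le \frac{-1}{\sigma\sqrt{k}}\right) = \Phi\!\left(\frac{-1}{\sigma\sqrt{k}}\right) + o(1) \longrightarrow \Phi(0) = \frac{1}{2}.$$
The main obstacle is really the nondegeneracy step; once $\mathrm{Var}(D_1) > 0$ is secured, the convergence to $1/2$ follows from the symmetry of the Gaussian limit about its zero mean, the integer-lattice adjustment being a routine technicality.
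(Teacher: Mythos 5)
Your proposal is correct and follows essentially the same route as the paper: the same i.i.d.\ decomposition $\Delta_k = D_1 + \cdots + D_k$, zero mean via Proposition~\ref{unif}, nondegeneracy of $\mathrm{Var}(D_1)$ from the fact that distinct trees are distinguished by their characters of parsimony score 1, and then the central limit theorem. Your explicit handling of the integer-lattice threshold via P\'olya's theorem is a small refinement the paper glosses over, but it is not a different argument.
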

\begin{proof}
Let $T, T^{\prime} \in UB(n)$ and $k \geq 1$, and let
$ \Delta_k = X_k(T) - X_k(T^{\prime})  = D_1 + D_2 + \dots + D_k,$
where the random variable $D_i = ps(f_i,T) - ps(f_i,T^{\prime}) ~ (i=1,\dots,k)$ and the $D_i$ are independent and identically distributed. Moreover $\EE[D_i] = 0$ and $D_i$ has a standard
deviation $\sigma$ that is strictly positive and finite. To see that $\sigma>0$, note that $\sigma^2\geq \PP[D_i \neq 0]$ by Chebychev's inequality, and  $D_i$ is nonzero whenever
$f_i$ corresponds to a two-state character that has parsimony score 1 on one of the trees $T, T^{\prime}$ and parsimony score greater than 1 on the other tree (at least one such  character must exist, since $T \neq T^{\prime}$, and every tree is uniquely determined by its characters of parsimony score 1). 
We can now apply the standard Central Limit Theorem to deduce that for an asymptotically standard normal variable $Z_k = \frac{\Delta_k - \EE[\Delta_k]}{\sigma \cdot \sqrt{k}}$,  we have:
\begin{align*}
\PP(\Delta_k < 0) &= \PP\left(Z_k < \frac{0-0}{\sigma \cdot \sqrt{k}}\right)  \overset{k \rightarrow \infty}{\longrightarrow} \frac{1}{2}.
\end{align*}
\end{proof}

Finally, we consider the limiting behaviour of $mp_k(T)$ as $k \rightarrow \infty$, and present simulations that suggest that even for $n=6$, this probability depends on the shape of the tree.  It is easily shown  that, for any $n>1$,  as $k \rightarrow \infty$, there is a unique most parsimonious tree, so $\sum_{T \in UB(n)}\lim_{k \rightarrow \infty} mp_k(T) =1$ (see e.g.  \cite{zhu} (Theorem 4(2)).  In other words,  $ \lim_{k \rightarrow \infty} mp_k(T)$ is a probability distribution on $UB(n)$. However, the additional claim there that $mp_k(T)$ is uniform on $UB(n)$  does not hold when $n=6$ and when either $k=2$ (Theorem~\ref{surprise}) or, it seems,  as $k\rightarrow \infty$, as we now explain. 

\subsection{Simulations} \label{simulations}
We used the computer algebra system {\em Mathematica} to generate alignments of lengths 2, 10, 100, 1,000, 10,000 and 100,000, respectively, by sampling characters for six taxa uniformly at random out of the 32 possible binary characters (we assume without loss of generality that the state of taxon 1 is fixed, say, to state $0$, whereas all other taxa can choose states $0$ or $1$). For each alignment, we ran an exhaustive search through the tree space of 105 unrooted binary phylogenetic trees in order to find all MP  trees. For each alignment length, we did 1,000 runs and we counted the average number of MP trees, as well as the number of times that each of the two tree shapes for six taxa (the caterpillar shape or the symmetric shape of $T$ and $T'$ in Fig.~\ref{diff}) were amongst the MP trees. We then calculated the ratio of the number of MP trees with a symmetric shape divided by the total number of MP trees. Note that this ratio should equal $\frac{1}{7}\approx 0.142857$ if both tree shapes were equally likely, because 15 out of the 105 possible  trees have the symmetric shape. However, the last column of Table 1 reveals  that only for the extremely short alignment of length 2 the ratio is close to this value
in our simulations (and it is not exactly equal to it, by Theorem~\ref{surprise}). Moreover,  the ratio  decreases away from $\frac{1}{7}$ as the alignment length increases (the small variation  at alignment length 10,000 is within one standard deviation).  This trend and the reported values strongly suggest that the limiting value of $mp_k(T)$ is not uniform across all trees in $UB(6)$. Note also that column 2  of Table 1 is also consistent with  the finding mentioned earlier that there will be a unique MP tree a with probability converging to $1$ as $k$ grows.

\begin{table}[htp]\footnotesize
 \begin{tabular}{cccccc}
 \hline
 Al. length & av. \# MP trees & \# symmetric tree was MP& \# caterpillar was MP & $\frac{ \text{\# symmetric MP trees}}{\text{\# MP trees}}$\\
 \hline
 2  & 17.177 & 2,375 & 14,802 & 0.138266\\
 10  & 3.908 & 365 & 3,543 & 0.0933982\\
 100  & 1.622 & 119 & 1503 & 0.0733662\\
 1,000  & 1.166 & 59 & 1107 & 0.0506003\\
 10,000  & 1.053 & 57 & 996 & 0.0541311\\
100,000  & 1.013 & 46 & 967 & 0.0454097\\
 \end{tabular}
 \caption{Overview of simulation results: For each alignment length, 1,000 runs were evaluated.}
  \end{table}

\subsection{Concluding comments}
In one sense, the two-state symmetric model is as favourable to all binary phylogenetic trees as is possible under maximum parsimony, since each tree has exactly the same probability distribution on the
parsimony score of $k$ random characters.  Moreover,  Proposition~\ref{central} shows that no one tree is any more likely to be an MP tree than another.  It may seem somewhat surprising, therefore, that the distribution of MP trees is not uniform, even asymptotically; however this has a simple explanation.  Although the characters are generated independently, and their parsimony scores is also independently distributed on any given binary tree, the MP binary tree is chosen once the  $k$ characters are given.  Thus these characters
are not independent random variables once we condition on a given tree being the MP tree for these characters.   Moreover, once one moves away from the simple two-state model (for example, to the $r$-state symmetric model) even the uniformity of MP scores on fixed trees disappears \cite{psp}.  In summary, while maximum parsimony on random data seems, in certain senses (described above),  to favour each binary tree equally, the method nevertheless exhibits a bias towards trees with certain tree shapes.

\subsection{Acknowledgments} We thank the Allan Wilson Centre for help funding this work. We also thank David Bryant for pointing out that MP trees might not
be uniformly distributed on sequences of random characters.

\section{Appendix: Proof of Theorem 1}

We first recall some definitions and establish some preliminary lemmas.
\begin{mydef}{[$X$-split, compatible]}
\label{co}

\noindent An \textit{$X$-split} is a bipartition of $X$ into two nonempty subsets, written $A \mid B$.
Given any phylogenetic $X$-tree $T$, if we delete any particular edge $e$ of $T$ and consider the leaf sets of the two connected components of the resulting disconnected graph we obtain an $X$-split, which is called a {\em split of} $T$ (corresponding to $e$). 
If two $X$-splits $A \mid B$ and $A^{\prime} \mid B^{\prime}$ of the some unrooted phylogenetic $X$-tree have the property that one of the four intersections $A \cap A^{\prime}$,  $A \cap B^{\prime}$, $B \cap A^{\prime}$, $B \cap B^{\prime}$ is empty, then $A \mid B$ and $A^{\prime} \mid B^{\prime}$ are said to be {\em compatible}.
A two-state character $f$ on $X$ is said to be \textit{compatible} with a phylogenetic $X$-tree $T$ if $f^{-1}(0) \mid f^{-1}(1)$ is an $X$-split of $T$.
Moreover,  a pair of two-state characters $f_1$ and $f_2$ on $X$ are said to be compatible with each other if $f_1$ and $f_2$ induce compatible $X$-splits (equivalently, if there exists a phylogenetic $X$-tree
that $f_1$ and $f_2$ are compatible with).
Finally, a two-state character $f$ on $X$ is {\em constant} if $f(x)$ takes the same value (0 or 1) for all $x \in X$.
\end{mydef}

The following result is easily established, with Part (b) following from the Split-Equivalence Theorem \cite[Theorem 3.1.4]{sem}.
\begin{lemma}
\label{tc}
\mbox{ }
\begin{itemize}
\item[(a)]  A two-state character $f$ is compatible with $T$ if and only if $ps(f,T)=1$.
\item[(b)]  A pair of two-state characters $f_1$ and $f_2$ are compatible if and only if there exists a tree $T \in UB(X)$ such that $f_1$ and $f_2$ are both compatible with $T$. 
\end{itemize}
\end{lemma}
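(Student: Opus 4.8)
The plan is to prove the two parts by different means: part (a) is a direct translation between the combinatorial notion of a split and the optimization that defines the parsimony score, whereas part (b) is essentially the ``equivalently'' clause of Definition~\ref{co} and reduces at once to the Split-Equivalence Theorem \cite[Theorem 3.1.4]{sem}.

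For part (a) I would argue both implications explicitly. Suppose first that $f$ is compatible with $T$, so that $f^{-1}(0)\mid f^{-1}(1)$ is a split of $T$, say the split corresponding to an edge $e$ of $T$. Deleting $e$ yields two components whose leaf sets are $f^{-1}(0)$ and $f^{-1}(1)$; assigning state $0$ to every vertex of the first component and state $1$ to every vertex of the second gives an extension $\bar f$ of $f$ whose only bichromatic edge is $e$, so $ch(\bar f,T)=1$ and hence $ps(f,T)\le 1$. Since a split has both blocks nonempty, $f$ is nonconstant, so no extension can have changing number $0$, and therefore $ps(f,T)=1$. Conversely, suppose $ps(f,T)=1$ and let $\bar f$ be a minimal extension, so that exactly one edge $e=\{u,v\}$ satisfies $\bar f(u)\ne\bar f(v)$. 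Deleting $e$ disconnects $T$ into two components, and because $e$ is the only bichromatic edge, $\bar f$ is constant on each component; the two constant values differ (otherwise $e$ would not be bichromatic), so the leaf sets of the two components are exactly $f^{-1}(0)$ and $f^{-1}(1)$. This exhibits $f^{-1}(0)\mid f^{-1}(1)$ as the split of $T$ corresponding to $e$, i.e. $f$ is compatible with $T$.

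For part (b) I would first unwind the definitions: $f_1$ and $f_2$ are compatible with each other precisely when the induced $X$-splits $\sigma_i=f_i^{-1}(0)\mid f_i^{-1}(1)$ ($i=1,2$) are compatible in the sense of Definition~\ref{co}. The Split-Equivalence Theorem asserts that a collection of $X$-splits is pairwise compatible if and only if there is a phylogenetic $X$-tree displaying every split in the collection. Applying it to the two-element collection $\{\sigma_1,\sigma_2\}$, for which pairwise compatibility coincides with compatibility of the pair, yields a tree displaying both $\sigma_1$ and $\sigma_2$; refining it arbitrarily if necessary gives a binary tree $T\in UB(X)$ that still displays both splits, and by part (a) both $f_1$ and $f_2$ are then compatible with $T$. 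The converse is immediate, since if a single $T$ displays $\sigma_1$ and $\sigma_2$ then these two splits arise from edges of a common tree and are therefore compatible.

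Since each part reduces either to an explicit extension or to a cited structural theorem, I do not anticipate a genuine obstacle; the only points needing care are the bookkeeping around constant characters (for which one block of $f^{-1}(0)\mid f^{-1}(1)$ is empty, so they fail to be compatible with any tree, consistently with the excluded score-$0$ case in part (a)) and the observation that for a pair of splits pairwise compatibility is the same as compatibility, so that the general Split-Equivalence Theorem applies without modification.
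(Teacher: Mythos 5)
Your proposal is correct and follows essentially the same route as the paper, which simply notes that the lemma ``is easily established'' with part (b) following from the Split-Equivalence Theorem \cite[Theorem 3.1.4]{sem}; your explicit two-directional argument for (a) via constructing and reading off a minimal extension, and your reduction of (b) to that theorem plus a refinement to a binary tree, are exactly the intended details. The only (harmless) quibble is that in (b) you do not actually need part (a): compatibility of $f_i$ with $T$ is by definition the statement that $f_i^{-1}(0)\mid f_i^{-1}(1)$ is a split of $T$.
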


\begin{lemma} \label{at3}
For a phylogenetic X-tree $T$, and a pair $\mathcal C = (f_1 , f_2 )$ of two-state characters on $X$ the following holds:
\begin{align*}
ps(\mathcal C , T) =
\begin{cases}
0, & \text{if $f_1 ,f_2$ are constant}; \\
1, & \text{if $f_1$ is compatible with $T$ and $f_2$ is constant (or vice versa)}; \\
2, & \text{if $f_1 ,f_2$ are compatible with $T$}; \\
\ge 3, & \text{otherwise.}
\end{cases}
\end{align*}
\end{lemma}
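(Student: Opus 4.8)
The plan is to reduce the two-character statement to a single-character analysis using additivity of the parsimony score, and then read off the four cases. Since $ps(\mathcal{C},T) = ps(f_1,T) + ps(f_2,T)$ by definition, it suffices to pin down the possible values of $ps(f,T)$ for a single two-state character $f$. The key is the following trichotomy. First, $ps(f,T)=0$ exactly when $f$ is constant: an extension with changing number $0$ must be constant on the connected tree $T$, forcing $f$ to be constant at the leaves, and conversely the constant extension attains changing number $0$. Second, $ps(f,T)=1$ exactly when $f$ is compatible with $T$, which is precisely Lemma~\ref{tc}(a). Hence $ps(f,T)\geq 2$ whenever $f$ is non-constant and incompatible with $T$, since such an $f$ is barred from the first two possibilities.

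With this trichotomy in hand, the four cases follow by adding the two single-character scores according to the types of $f_1$ and $f_2$, each type being constant, compatible with $T$, or neither. Both constant gives $0+0=0$; one compatible and the other constant gives $1+0=1$; both compatible gives $1+1=2$. These recover the first three cases using only additivity and the trichotomy.

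The step I expect to be the main obstacle is the final inequality, since this is where the verbal description and the arithmetic bound must be matched carefully. The cleanest route is to first absorb any constant character: if one of $f_1,f_2$ is constant, additivity collapses $ps(\mathcal{C},T)$ to the score of its partner, so the genuinely two-character regime is the one in which both $f_1$ and $f_2$ are non-constant. There each character contributes at least $1$, and failing the earlier cases means they are not both compatible with $T$, so at least one contributes $\geq 2$; summing yields $ps(\mathcal{C},T)\geq 1+2=3$. I would devote the care here to checking that the enumeration of type-combinations is exhaustive, so that the final case is exactly the complement of the first three and the lower bound is attained correctly.
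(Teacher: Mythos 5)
Your proposal follows essentially the same route as the paper's proof: additivity of the parsimony score over the two characters, the single-character trichotomy ($ps(f,T)=0$ iff $f$ is constant, $ps(f,T)=1$ iff $f$ is compatible with $T$ by Lemma~\ref{tc}(a), hence $ps(f,T)\ge 2$ otherwise), and then case-by-case addition. The first three cases are handled exactly as in the paper and are fine.

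Your final case, however, has a gap --- and it is worth saying plainly that the paper's own proof has the identical gap. You ``absorb'' any constant character and then establish $ps(\mathcal C,T)\ge 3$ only in the regime where both $f_1$ and $f_2$ are non-constant. But the ``otherwise'' clause also contains the configuration in which one character is constant and the other is non-constant yet incompatible with $T$: this pair is not covered by any of the first three lines, because a constant character is not compatible with $T$ under Definition~\ref{co} (it induces no $X$-split). In that configuration $ps(\mathcal C,T)$ equals the score of the non-constant partner, which is $\ge 2$ but can equal $2$ (take $n=4$, $T$ with split $12\mid 34$, $f_1$ constant, and $f_2$ inducing $13\mid 24$), so the claimed bound of $3$ actually fails there. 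The paper's proof papers over this by asserting that ``in all other cases neither $f_1$ nor $f_2$ is constant,'' which is likewise not a consequence of failing the first three cases; so the defect sits in the statement of the lemma rather than in your argument, and it is harmless downstream because the lemma is only ever invoked for pairs of non-constant (indeed informative) characters or in situations where the constant character is treated separately. To make your write-up airtight, either add the hypothesis that neither character is constant to the fourth case, or dispose of the (constant, incompatible) configuration explicitly before restricting to the two-non-constant regime.
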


\begin{proof}
Let $T$ be a tree with two constant two-state characters $(f_1,f_2)$. Then for both characters the parsimony score is 0 and therefore the maximum parsimony score is 0. 
Next, without loss of generality, let $f_1$ be compatible with $T$ and $f_2$ constant, then
\begin{align*}
ps(\mathcal C ,T) = \underbrace{ps(f_1 ,T)}_{=1, ~ \text{by Lemma}~ \ref{tc}(a)} + \underbrace{ps(f_2 ,T)}_{=0, ~ \text{because $f_2$ is constant}} = 1 + 0 = 1.
\end{align*}
Now suppose that $f_1$ and $f_2$ are both not constant, but are compatible with $T$. By Lemma \ref{tc}(a) the parsimony score of each character is 1, so that $ps(\mathcal C ,T) = 2$. 
In all other cases we know that neither $f_1$ nor $f_2$ are constant, so $ps(f_1 ,T ) \neq 0 \neq ps(f_2 , T)$. Additionally, at least one of the two characters has a score of at least 2, because it is not compatible with $T$. Thereby the parsimony score is at least 3.
\end{proof}

\begin{lemma}\label{pars3}
For a pair $\mathcal{C} = (f_1,f_2)$ of two-state characters we have:
$$ \min_{T \in UB(n)} \{ ps(\mathcal{C},T)\}= 
\begin{cases}
0, & \text{if $f_1$ and $f_2$ are constant}; \\
1, & \text{if exactly one of $f_1$ or $f_2$ is constant}; \\
2, & \text{if neither $f_1$ nor $f_2$ are constant, but $f_1$ and $f_2$ are compatible}; \\
3, & \text{otherwise.}
\end{cases}
$$ 
\end{lemma}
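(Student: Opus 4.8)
The plan is to treat the four cases separately, in each obtaining a lower bound on $\min_{T \in UB(n)} ps(\mathcal{C},T)$ directly from Lemma~\ref{at3} and then exhibiting a single tree that attains it. The lower bounds require no new work: for any fixed $T$, Lemma~\ref{at3} already classifies $ps(\mathcal{C},T)$, so I only need to check, in each case, which of its four alternatives can actually occur as $T$ ranges over $UB(n)$, and then produce a witnessing tree for the smallest surviving value.

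The first three cases are short. If $f_1$ and $f_2$ are both constant, then $ps(\mathcal{C},T)=0$ for every $T$, so the minimum is $0$. If exactly one of them, say $f_2$, is constant, then $ps(\mathcal{C},T)=ps(f_1,T)$, which is $\ge 1$ for every $T$ (a score of $0$ would force $f_1$ constant); and equality is achieved by taking any binary tree having $f_1^{-1}(0)\mid f_1^{-1}(1)$ as one of its splits, so the minimum is $1$. If neither character is constant but $f_1,f_2$ are compatible, then each character has score $\ge 1$ on every tree, so $ps(\mathcal{C},T)\ge 2$ throughout; by Lemma~\ref{tc}(b) there is a tree $T$ with which both $f_1$ and $f_2$ are compatible, and for that $T$ Lemma~\ref{at3} gives $ps(\mathcal{C},T)=2$, so the minimum is $2$.

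The remaining case---neither character constant and $f_1,f_2$ incompatible---is where the real work lies. The lower bound is clean: by Lemma~\ref{tc}(b) no single tree is compatible with both $f_1$ and $f_2$, so for every $T$ the clauses of Lemma~\ref{at3} that yield a value below $3$ are all excluded (clauses $0$ and $1$ need a constant character, clause $2$ needs both characters compatible with $T$), whence $ps(\mathcal{C},T)\ge 3$ for all $T$. The main obstacle is the matching upper bound, namely constructing one tree on which the total score is exactly $3$. Here I would exploit the combinatorial meaning of incompatibility: writing $X_{ij}=f_1^{-1}(i)\cap f_2^{-1}(j)$, incompatibility of the two splits (Definition~\ref{co}) says precisely that all four blocks $X_{00},X_{01},X_{11},X_{10}$ are nonempty. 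I would then build the caterpillar whose leaves appear, along the spine, in the block order $X_{00},X_{01},X_{11},X_{10}$. On this tree the single spine edge separating $X_{00}\cup X_{01}$ from $X_{11}\cup X_{10}$ realizes the split of $f_1$, so $ps(f_1,T)=1$; and deleting the two spine edges that bound the middle band $X_{01}\cup X_{11}$ leaves three components $X_{00}\mid X_{01}\cup X_{11}\mid X_{10}$, each monochromatic for $f_2$, giving $ps(f_2,T)\le 2$. Hence $ps(\mathcal{C},T)\le 3$, and together with the lower bound the minimum equals $3$.

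To make the count $ps(f_2,T)\le 2$ rigorous I would first record the standard fact that, for a two-state character, $ps(f,T)$ equals the minimum number of edges of $T$ whose removal leaves every resulting component monochromatic in $f$ (deleting $m$ edges of a tree yields $m+1$ components, and colouring each component by its leaf colour exhibits an extension with exactly $m$ changes). With that fact the two displayed cuts immediately give $ps(f_2,T)\le 2$ and the single cut gives $ps(f_1,T)=1$, with no further computation. The only point needing care is that the three block-boundary spine edges are genuinely distinct and all present, which holds precisely because all four blocks $X_{ij}$ are nonempty---exactly what incompatibility guarantees.
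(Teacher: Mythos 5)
Your proof is correct and follows essentially the same route as the paper's: the same four-way case split, the same use of Lemma~\ref{tc} and Lemma~\ref{at3} for the lower bounds, and for the incompatible case a witness tree that places the four nonempty blocks $X_{ij}$ contiguously so that $f_1$ costs $1$ and $f_2$ costs $2$ --- which is exactly the paper's Figure~2 construction, just realized as a caterpillar. The only difference is that you make explicit the edge-cut characterization of the parsimony score to justify $ps(f_2,T)\le 2$, a step the paper leaves implicit.
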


\begin{proof} 
If $f_1$ and $f_2$ are both constant then the MP score of this pair of characters on any tree  is 0. 
Otherwise if exactly one of $f_1$ or $f_2$ is constant (without loss of generality, $f_1$ is constant) then $$\min_{T \in UB(n)} \{ps(\mathcal{C},T) \} = \min_{T \in UB(n)} \{ps(f_1,T) + ps(f_2,T) \} = \min_{T \in UB(n)} \{0 + ps(f_2,T) \}  = 1.$$
Now, suppose that  neither $f_1$ nor $f_2$ are constant, but $f_1$ and $f_2$ are compatible with each other. From Lemma \ref{tc}(b) there exists a tree $T \in UB(n)$ such that $f_1$ and $f_2$ are compatible with this tree, and so, Lemma \ref{at3} shows that $ps(\mathcal{C},T)=2$ and that $T$ is an MP tree for ${\mathcal C}$. 
In the last case $f_1$ and $f_2$ are not constant and $f_1$ and $f_2$ are incompatible with each other. For the corresponding $X$-splits, $A_1 \mid B_1$ and $A_2 \mid B_2$, we may suppose that $A_1$ and $A_2$ correspond to 0, $B_1$ and $B_2$ correspond to 1. 
Consider any phylogenetic $X$-tree of the type shown in Fig.~\ref{compatible} with the following leaf sets (none of which is empty); $X=A_1 \cap A_2, Y=A_1 \cap B_2, W=B_1 \cap A_2, Z=B_1 \cap B_2$. Note that there is no tree with a lower score by Lemma~\ref{at3}.

\bigskip

\begin{figure}[ht]
\centering
\includegraphics[scale=1]{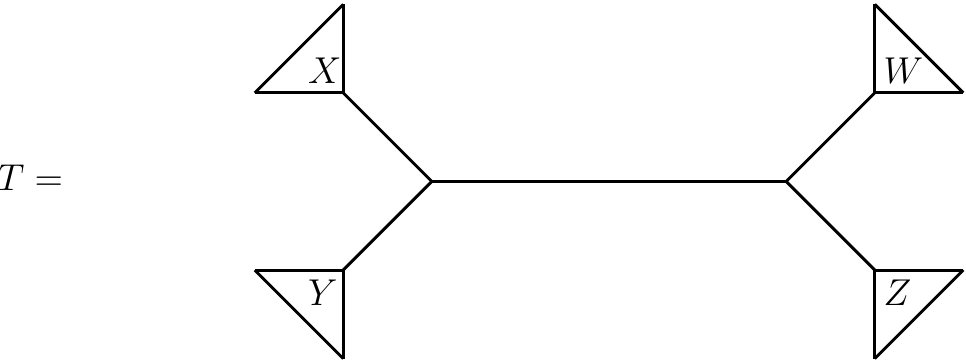}
\caption{$T \in UB(n)$ with four disjoint subtrees $X, Y, W$ and $Z$. The tree structure of $X, Y, W$ and $Z$ is unimportant.}
\label{compatible}
\end{figure}

\noindent
For $f_1$ all leaves in $X$ and $Y$ are in state 0 and all leaves in $W$ and $Z$ are in state 1. For $f_2$ all leaves in $X$ and $W$ are in state 0 and all leaves in $Y$ and $Z$ are in state 1. Then the MP score of the two characters on this tree is $\min_{T \in UB(n)} \{ps(f_1,T) + ps(f_2,T) \} = 1+2 = 3.$
\end{proof}

{\em Proof of Theorem 1:}

We describe an explicit counterexample for  $n=6, k=2$, namely the trees $T_1, T_2$ in  $UB(6)$ as shown in Fig. \ref{trees}, for which we will show that $mp_{2}(T_1) > mp_{2}(T_2)$. 

\begin{figure}[ht]
\centering
\includegraphics[scale=0.70]{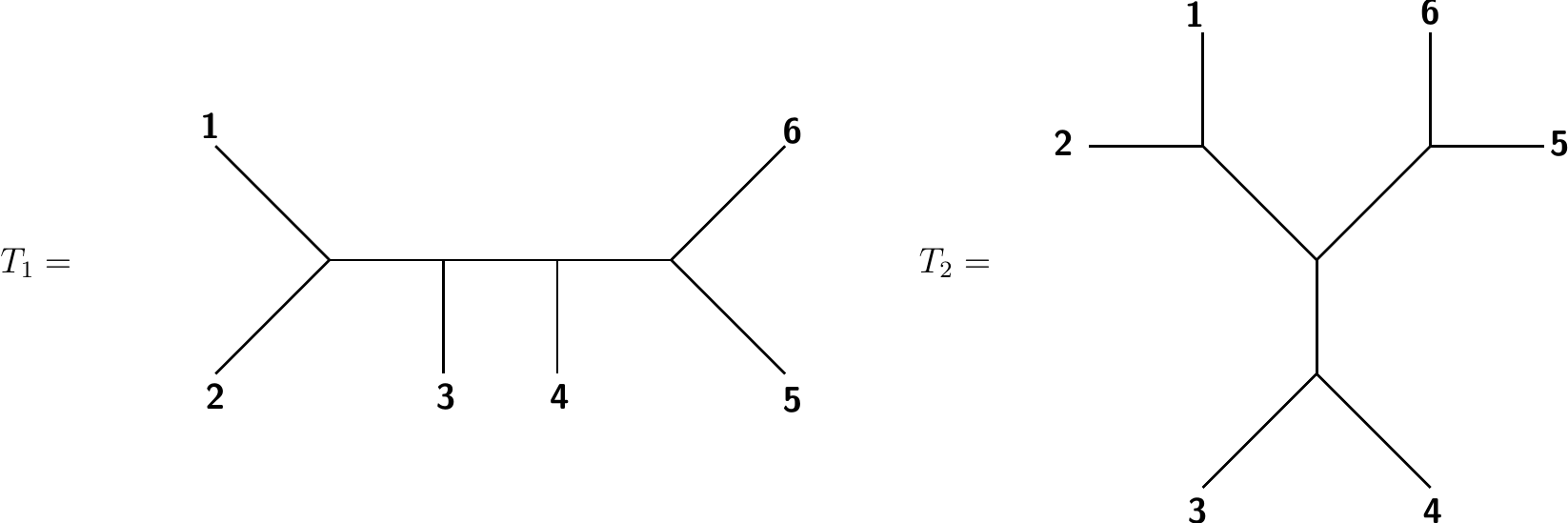}
\caption{$T_1, T_2 \in UB(6)$ with different tree shapes.}
\label{trees}
\end{figure}

Let $F := \{ f: X \rightarrow \{ 0,1 \} \}$ be the set of all two-state characters $f$ on $X = \{1, \dots, 6 \}$. Then $N := \{ f: X \rightarrow \{ 0,1 \} :   \# \text{ leaves in state $0$ is either $0, 1$, $n-1$ or $n$} \}$ is the set of all non-informative two-state characters $f$ on $X$. For each non-informative two-state character on a tree $T$, the character adds the same parsimony score to every tree (either 0 or 1). Furthermore, for any $T \in UB(n)$, define $I_j(T) := \{ f \in F \backslash N :  ps(f,T) = j \}$; $j = 1, 2, 3$.  Thus, $I_j(T)$ is the set of all informative two-state characters $f$ on $X$ which have a parsimony score $j$ on $T$, and when $n=6$,  $F$ is the (disjoint) union of the four sets $I_1(T), I_2(T), I_3(T), N$.   The number of characters in $I_1(T), I_2(T)$ and $I_3(T)$ is the same for any choice of $T \in UB(n)$ (this follows since the number of binary characters of parsimony score $k$ is the same for each choice of $T \in UB(n)$ (\cite{sem}, Theorem 5.6.2).  \\
Now we have a look at all possible cases to choose $f_1$ and $f_2$ from $N, I_1(T), I_2(T)$ and $I_3(T)$. 
The following statements about various exclusive cases apply for any $n$, but we will specialise soon to $n=6$ (since then the following seven cases exhaust every possibility). 

\noindent{\bf Case 1}: $f_1, f_2 \in N$. In this case, each tree $T \in UB(n)$ is an MP tree for this pair of characters, because there is no other tree with a lower score.
\\
{\bf Case 2:}  $f_1 \in N$ and $f_2 \in I_1(T)$ or ($f_1 \in I_1(T)$ and $f_2 \in N$).
\\
If the score of an informative character is 1, no tree achieves a better score than $T$, because only a non-informative character can have the score 0.  
Thus $T$ is an MP tree in Case 2.
\\
{\bf Case 3:}  $f_1 \in N$ and $f_2 \in I_2(T) \cup I_3(T)$ (or $f_1 \in I_2(T) \cup I_3(T), f_2 \in N$). 
\\
A non-informative $f_1$ contributes the same score to each tree $T \in UB(n)$. Moreover, when the  score of an informative character is 2,  this score can always be reduced.  Therefore $T$ with these characters is not an MP tree. 
Moreover, if  $f_1$ has the score 1 and $f_2 \in I_3(T)$, the score of the $T$ is $4$, and so, by Lemma \ref {pars3}, $T$ is not an MP tree. 
Finally,  if  $f_1$ has the score 0 and $f_2 \in I_3(T)$ one can always find a tree for $f_2$ which has a lower score. For this reason $T$ is never an MP tree.
\\
{\bf Case 4:}  $f_1, f_2 \in I_1(T)$. \\
As in Case 2 the scores of the characters cannot be improved by a tree different from $T$, so $T$ is an MP tree.
\\
{\bf Case 5:} For $j =2$ or $j=3$,   $f_1, f_2 \in I_j(T)$  \\
A tree $T$ with these characters has score 4 or score 6 and because of Lemma \ref {pars3}, $T$ is never an MP tree.
\\
{\bf Case 6:}   $f_1 \in I_3(T)$ and $f_2 \in I_1(T) \cup I_2(T)$ (or $f_1 \in I_1(T) \cup I_2(T)$ and $f_2 \in I_3(T)$). \\
A tree $T$ with these characters has score 4 or score 5 and by Lemma \ref {pars3}, $T$ is never an MP tree.
\\
{\bf Case 7:} $f_1 \in I_1(T)$ and $f_2 \in I_2(T)$ (or $f_1 \in I_2(T)$ and $f_2 \in I_1(T)$). \\
In this case,  we need to further investigate whether or not  $T$ is an MP tree.
\\ \\

When $n=6$ these represent all possible cases, and the only case where a different choice of $T \in UB(6)$ could affect whether or not $T$ is an MP tree is Case 7, which we consider in detail now.   
To  simplify the counting that follows, we may suppose, without loss of generality, that $f_1$ and $f_2$ both assign leaf 1 the state 0; moreover, for Case 7, we will just count 
the number of pairs of such characters $(f_1, f_2)$ where $f_1 \in I_1(T)$ and $f_2 \in I_2(T)$ for $T \in UB(6)$.

The character $f_1$ can be described by making a change on a single edge $\alpha$ of $T$, while for $f_2$ we require two changes, on edges labelled $\beta$ (we will see that these two $\beta$ edges are not always uniquely determined by $f_2$).  The placement of the two $\beta$ edges in relation to the $\alpha$ edges falls into three scenarios, referred to as (a), (b) and (c) in 
Fig.~\ref{fig3} (circles in this figure denote leaves or subtrees). 

\bigskip

\begin{figure}[ht]
\centering
\includegraphics[scale=0.50]{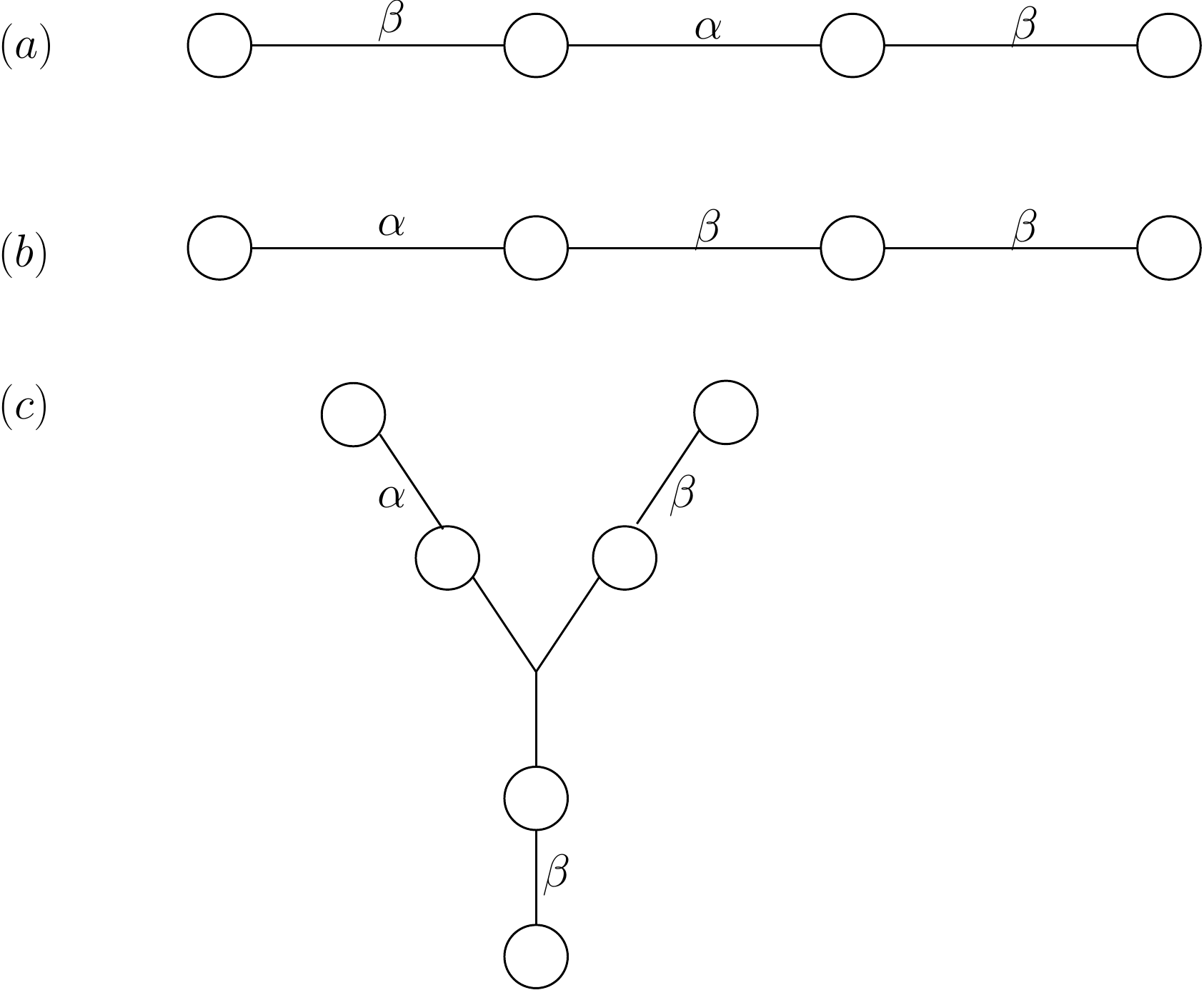}
\caption{$\alpha$: edge which corresponds to the single state change for $f_1 \in I_1(T)$ and $\beta$: two edges which correspond to the two state changes for $f_2 \in I_2(T)$.}
\label{fig3}.
\end{figure}

\noindent

Notice that in scenario (a) the splits induced by $f_1$ and $f_2$ are incompatible.  Thus, since the MP score of $T$ is 3, and this is best possible (by  
Lemma~\ref{pars3}, since the splits are incompatible), so $T$ is an MP tree under this scenario. 
In scenarios (b) and (c) the splits induced by $f_1$ and $f_2$ are compatible, and since the MP score of $T$ of 3 is not best possible (again 
by Lemma~\ref{pars3}, since the splits are compatible), $T$ is not an MP tree.  In summary,  $T$ is an MP tree if and only if scenario (a) applies.
We thus want to count the number of pairs of characters $(f_1, f_2)$ with $f_1 \in I_1(T)$ and $f_2 \in I_2(T)$ that correspond to scenario (a), and determine how this depends on the shape of the tree. 

For $T_1$ there are three possible edges for $\alpha$, so that $f_1 \in I_1(T)$ (see Fig.~\ref{cats}).
\begin{figure}[ht]
\centering
\includegraphics[scale=0.70]{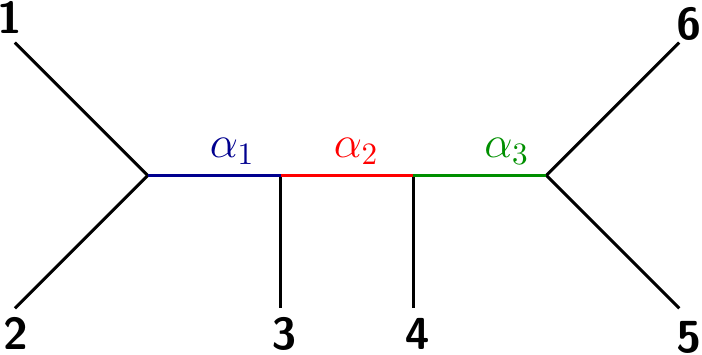}
\caption{$\alpha_1, \alpha_2, \alpha_3$ are the three possible edges for $T_1$, so that $f_1 \in I_1(T)$.}
\label{cats}
\end{figure}
\noindent
To arrive at scenario (a), the two $\beta$ edges must be on different sides of $\alpha$. So we have $2 \cdot 6=12$ different options to place the $\beta$ edges for each $\alpha_1$ and $\alpha_3$. For $\alpha_2$ we have $4 \cdot 4=16$ different options. But we are not only interested in how many places for changes we have. We  rather want to count the possible two-state characters $f_2$.  So we have to check if we count some two-state characters $f_2$ twice.  
We find that for every $\alpha_i ~(i=1,2,3)$ we count exactly two characters twice (see Fig.~\ref{cats2}). 
\begin{figure}[ht]
\centering
\includegraphics[scale=0.50]{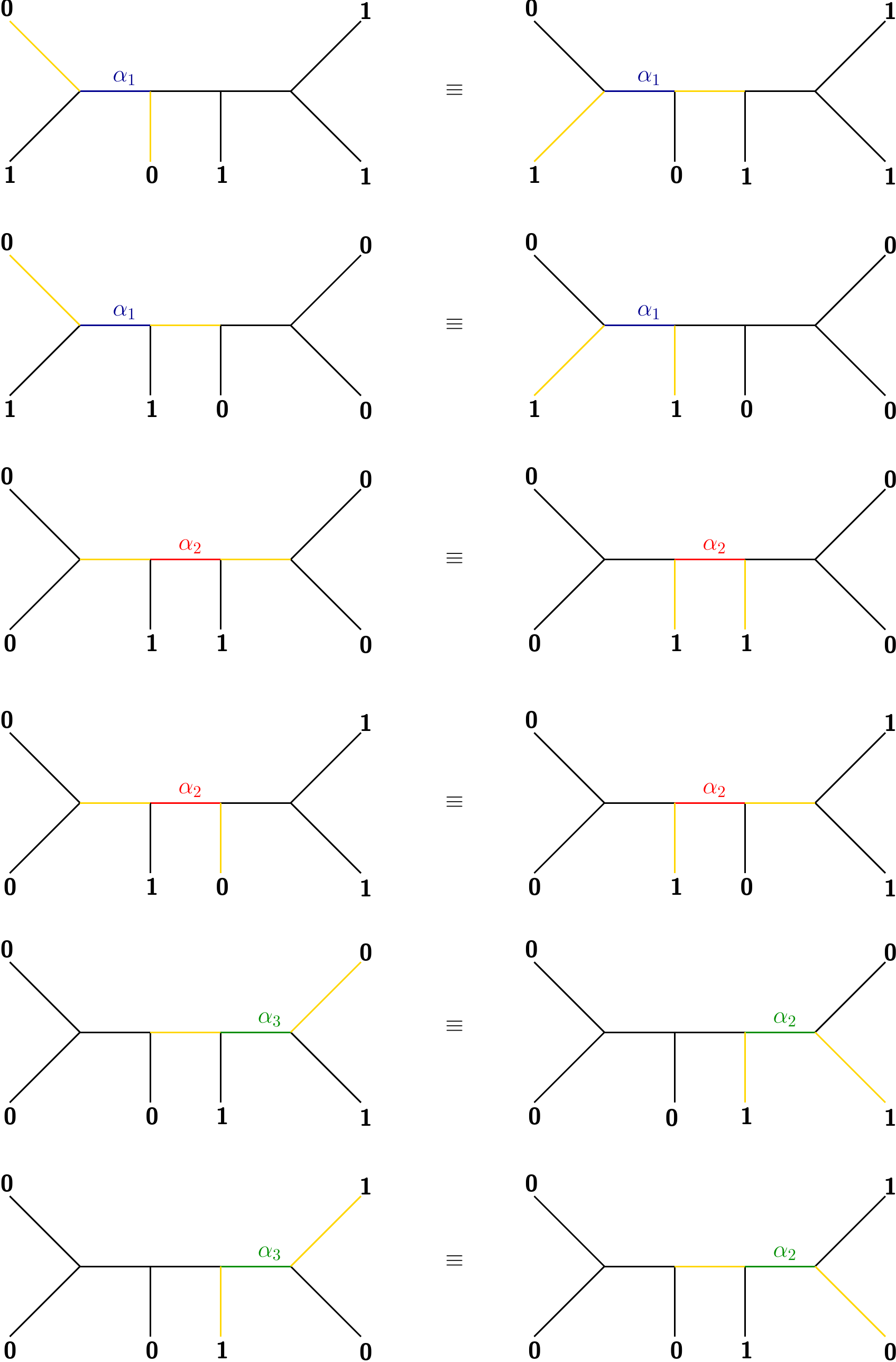}
\caption{For every $\alpha_i ~(i=1,2,3)$ we count two two-state characters twice.}
\label{cats2}
\end{figure} 
\noindent
So for $T_1$ in Case 6 we get $2 \cdot 6 + 2 \cdot 6 + 4 \cdot 4 - 6 = 34$ pairs of two-state characters $f_1$ and $f_2$ corresponding to scenario (a) (i.e. when $T_1$ is an MP tree). \\
Now we repeat this type of analysis for $T_2$, where we can also find three possible edges for $\alpha$; $\alpha_1, \alpha_2, \alpha_3$ (see Fig.~\ref{cen}).  But because of the symmetry of $T_2$ we just have to focus on one case. Here we focus on the edge $\alpha_1$; the  other two cases are analogous.
\begin{figure}[ht]
\centering
\includegraphics[scale=0.70]{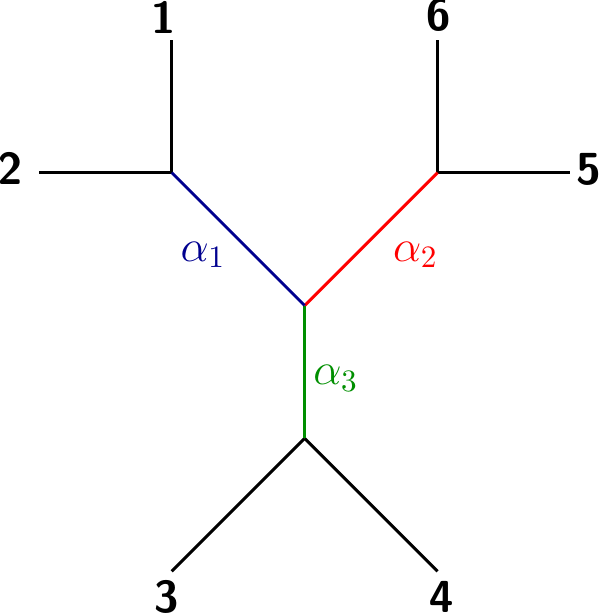}
\caption{$\alpha_1, \alpha_2, \alpha_3$ are the three possible edges for $T_2$, so that $f_1 \in I_1(T)$.}
\label{cen}
\end{figure}
\noindent
To arrive at scenario (a) we  have to place the $\beta$ edges on different sides of the $\alpha$ edge. So we get $3 \cdot (2 \cdot 6) = 36$ ways to achieve this for a combination of one $\alpha$ and the two $\beta$ edges. 
But once again  we must check which two-state characters are counted twice. Here there are two cases for every possible $\alpha$. The two cases for $\alpha_1$ are shown in Fig. \ref{snow2}.
\begin{figure}[ht]
\centering
\includegraphics[scale=0.50]{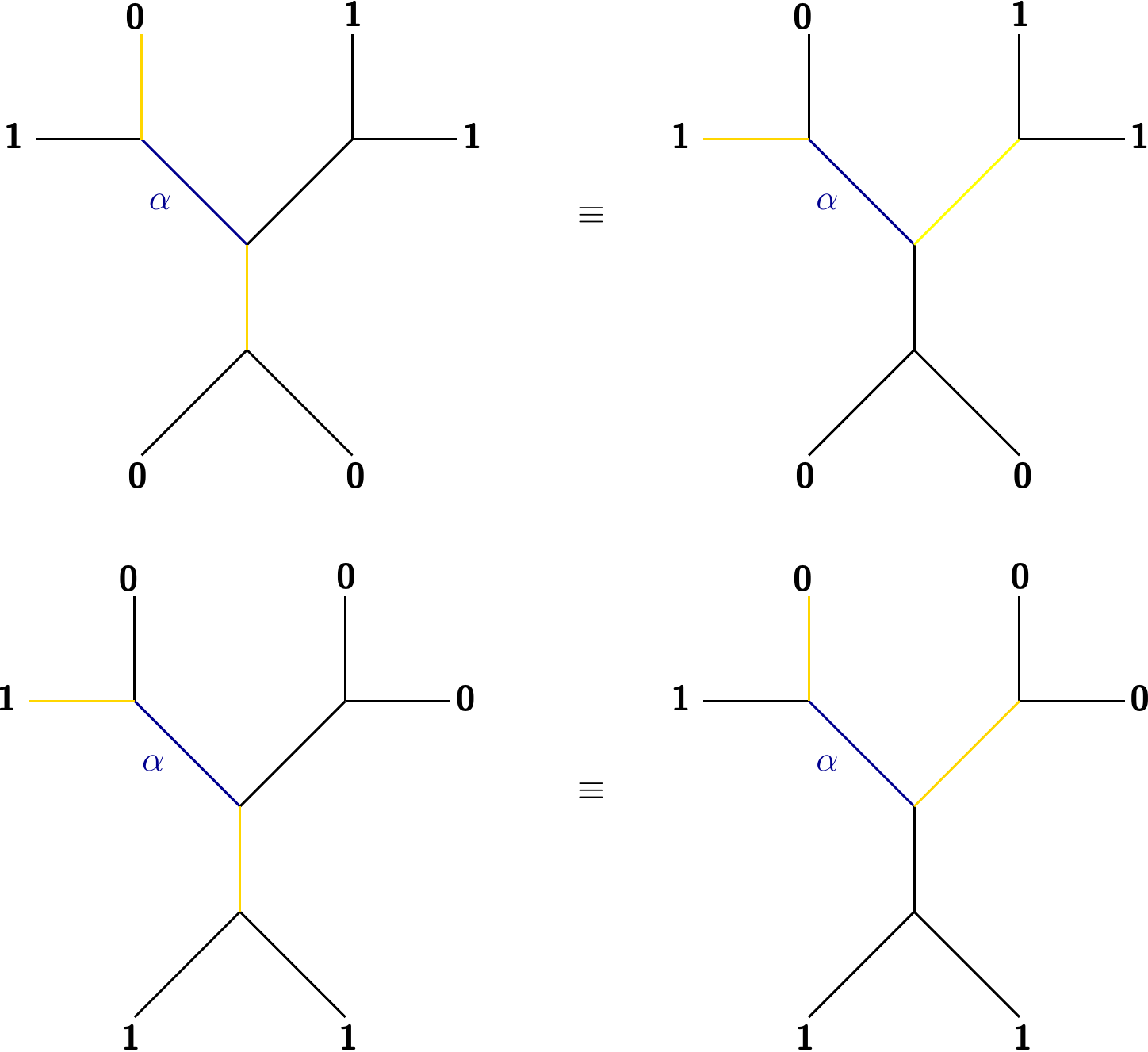}
\caption{For every $\alpha_i (i=1,2,3)$ we count two two-state characters twice. Here the two two-state characters we count twice for $\alpha_1$ are shown.}
\label{snow2}
\end{figure}
\noindent
Therefore we get $36 - 3 \cdot 2 = 30$ combinations of $f_1$ and $f_2$ so that $T_2$ is an MP tree. 
Now we see that for $T_1$ there are more combinations of $f_1 \in I_1$ and $f_2 \in I_2$ to be an MP tree than for $T_2$. For the reason that in all other cases the number of combinations of $f_1$ and $f_2$ to be an MP tree are the same, we can conclude that the probability that $T_1$ is an MP tree is higher than for $T_2$.
\hfill$\Box$

\end{document}